\documentclass[10pt, final, journal, letterpaper, twocolumn]{IEEEtran}
\usepackage{amssymb}
\usepackage{amsmath}
\usepackage{amsfonts}
\usepackage{graphicx}
\usepackage{algorithm}
\usepackage{algorithmic}
\usepackage{epstopdf}
\usepackage{cite}
\usepackage{exscale}
\usepackage{relsize}
\usepackage{stfloats}


\usepackage{color}
\usepackage{multirow}
\usepackage{subfigure}
\usepackage{bm}

\newtheorem{theorem}{\textbf{Theorem}}

\newtheorem{proposition}{\textbf{Proposition}}

\begin{document}
\title{Interference Cancellation at Receivers in Cache-Enabled Wireless Networks}
\author{Chenchen Yang$^{*}$, Yao Yao$^{\dag}$, Bin Xia$^{*}$, Kaibin Huang$^{\divideontimes}$, Weiliang Xie$^{\ddag}$, Yong Zhao$^{\ddag}$\\
$^{*}$ Department of Electronic Engineering, Shanghai Jiao Tong University, Shanghai, P. R. China\\
$^{\divideontimes}$ Department of Electrical and Electronic Engineering, the University of Hong Kong\\
$^{\dag}$ Huawei Technologies Co., Ltd;~
$^{\ddag}$ Technology Innovation Center, China Telecom\\
Email: {\{zhanchifeixiang, bxia\}@sjtu.edu.cn, \{yyao, huangkb\}@eee.hku.hk, \{xiewl, zhaoyong\}@ctbri.com.cn}
}
\maketitle
\begin{abstract}
In this paper, we propose to exploit the limited cache packets as side information to cancel incoming interference at the receiver side. We consider a stochastic network where the random locations of base stations and users are modeled using Poisson point processes.  Caching schemes to reap both the local caching gain and the interference cancellation gain for the users are developed based on two factors: the density of different user subsets and the packets cached in the corresponding  subsets. The packet loss rate (PLR) is analyzed, which depends on both the cached packets and the channel state information (CSI) available at the receiver. Theoretical results reveal the tradeoff between caching resource and wireless resource. The performance for different caching schemes are analyzed and the minimum achievable PLR for the distributed caching is derived.
\end{abstract}
\begin{keywords}
Caching, interference cancellation, packet loss rate, caching scheme, Poisson point process.
\end{keywords}
\section{INTRODUCTION}
Both the wireless network topology and the information transmission mode are changing with the advancement of information and communication technologies. In regard to the wireless network topology, nodes are densely and randomly located, yielding serious interference \cite{SINR}. Interference is a key limitation on future networks. In regard to information transmission, content-centric services (e.g., multimedia transmission) instead of connection-centric services (e.g., voice communication) account for most of mobile traffic \cite{5G2},\cite{lau5G}.
Caching exploiting content-centric traffic has been proposed to unleash the ultimate potential of the network \cite{femto2, caching1}.

There are some works on interference management in cache-enabled networks, which are in terms of the degrees of freedom (DoF) from the information-theoretic point of view. 
In \cite{aided}, an interference channel with three transmitters and three receivers is considered. It is shown that caching split files at transmitters can increase the throughput via interference management. In \cite{IA}, the DoF gain is proved to be achievable via caching parts of files at the transmitters for a three-user interference channel. Caching at base stations (BSs) for opportunistic multiple-input multiple-output cooperation  is proposed in \cite{Anliu} to achieve the DoF gain without requiring high-speed fronthaul links. In \cite{lau5G}, physical layer (PHY)-caching at BSs is proposed to mitigate interference and improve the number of DoF in wireless networks. The storage-latency tradeoff  is analyzed in \cite{sengupta} for the network with cache-enabled transmitters, and the transmission rate is specified by the DoF. The standard DoF is adopted in \cite{tranrecei} as the performance metric for the network with interference channels. Transmitters and receivers are with caching strategy of equal file splitting. \cite{hachem} studies the benefit of caching for the system with  two cache-enabled transmitters and two cache-enabled receivers in the interference channel. The layered architecture is proposed and the DoF for the optimal strategy is computed.  The concept of fractional delivery time (FDT) is proposed in \cite{TMM} to reflect  the DoF enhancement due to transmitter caching and the load reduction due to receiver caching.  A complete constant-factor approximation of the DoF is proposed in \cite{hachem2} for the network with caching at both transmitters and receivers.

However,  previous works with  information-theoretic framework assume that the global channel state information (CSI) is available. The performance for the network with only partial CSI available should be investigated.
The randomness and complexity of node locations due to the stochastic topology of the network need to be addressed.
And interference management is performed based on the caching at the transmitter side in previous works, where extra payload of fronthaul/backhaul is needed for the cooperation among transmitters. Moreover, caching schemes for cache-enabled networks to exploit both  the local caching gain and the interference cancellation gain at the receiver side need to be elaborated further. Different from the information-theoretic framework, in this paper, we focus on the stochastic network where random numbers of BSs and users are spatially located in the two-dimensional plane. The effects of CSI are addressed and the caching schemes are analyzed. Our main contributions are summarized as follows,
\begin{itemize}
\item We propose to cancel the incoming interference with  partial CSI and cached packets at users. Random numbers of BSs and users are considered in the stochastic network.
\item The effects of the CSI are analyzed on the network performance in terms of packet loss rate (PLR), specifically, when partial, global and none of CSI are available.
\item The effects of the caching scheme on the PLR are further elaborated. And the optimal caching scheme for the users with distributed caching are provided to reap both the local caching gain and the interference cancellation gain.
\end{itemize}

\section{System model and Protocol description}\label{sec:system}
\subsection{Network Architecture}
Consider the wireless network where BSs and users are independently located according to Poisson Point Processes (PPPs) $\Phi_b$ and $\Phi_u$. The intensities of  $\Phi_b$ and $\Phi_u$ are $\lambda_b$  and $\lambda_u$, respectively. The system is slotted and the duration of each slot is $\tau$ $($seconds$)$. Each user is assumed to randomly request a packet of the fixed length $T$ $($Mb$)$ in a slot from the packet library $\mathcal{L} \triangleq \{l_1, l_2, \cdots ,l_N\}$ \cite{TMM}. The packet $l_i, i\in\{1, 2, \cdots,N\}$ is requested by the user with the probability $f_i\in[0,1]$ in the slot. Define  $\mathcal{F} \triangleq \{f_1, f_2, \cdots ,f_N\}$ as the packet access probability set. Note that $\sum_{i=1}^N f_i=1$ and  without loss of generality (w.l.o.g.), $f_1 \geq f_2,\cdots, \geq f_N$.

Each user has a limited caching storage with size of $M\times T$ $($Mb$)$, and $M$ of the $N$ packets in the packet library $\mathcal{L}$ are pre-cached at the user. Therefore, there are totally $ \binom{N}{M}$ kinds of caching schemes for different users.  Classify all the users  into $\binom{N}{M}$ subsets according to their caching schemes.  Denote the density of  the users in the $i$-th  subset as $p_i\lambda_u$ for $p_i\in[0,1]$, $i=1,2,\cdots,\binom{N}{M}$ and  $\sum_{i=1}^{\binom{N}{M}} p_i=1$. W.l.o.g., assume $p_1\geq p_2,\cdots,\geq p_{\binom{N}{M}}$.  Let $q_{i,j}\in\{0, 1\}$ denote whether users in the $i$-th subset has cached packet $l_j$ ($j=1,2,\cdots,N$), where  $q_{i,j}=1$ indicates that packet $l_j$  has been cached in the users of the $i$-th subset,  and $q_{i,j}=0$ otherwise.  Then the caching scheme for the network depends on matrices $\mathbf{P}\triangleq[p_i]_{N\times 1}$ and $\mathbf{Q}\triangleq[q_{i,j}]_{\binom{N}{M}\times N}$. 
Denote the set of the cached packets in the users of the $i$-th subsect as $\mathcal{L}_i\triangleq\{l_j: q_{i,j}=1, j=1,2,\cdots,N\}$. When the requested packet has been cached at the user, the user reads it out immediately from its local caching; otherwise, the user should obtain the requested packet from its nearest BS. Requests are waiting to be served in the infinite buffer of the BS, and each BS is assumed to transmit a packet in each slot on the FIFO-basis (first-in, first-out). Consider the service discipline that a request is dropped out of the buffer at the end of the slot assigned to it, no matter whether the BS has transmitted the requested packet successfully or not. The packet loss rate will be elaborated in Section \ref{sec:lossrate}.
\subsection{Interference Cancellation with Cached Packets}
\begin{figure}[t]
\centering
\includegraphics[width=2.8in]{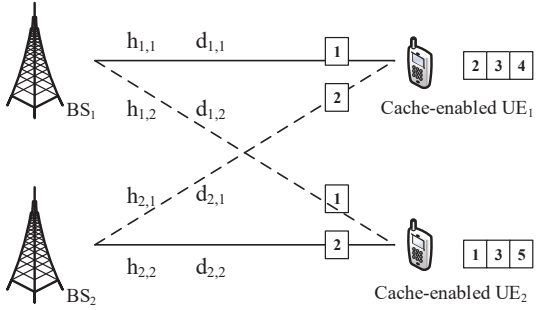}
\caption{Interference cancellation in the cache-enabled wireless network.}
\label{interference}
\end{figure}
All BSs share the same wireless channel with bandwidth of $B$ $($MHz$)$ to transmit packets to users in the downlink. Consider the CSI in the downlink is available at the user, if the distance from the BS to the user is smaller than $R_b$.
As a simplified prototype of the network, Fig.  \ref{interference} illustrates two BSs ($\text{BS}_1$ and $\text{BS}_2$) and two cache-enabled users. The first cache-enabled user has stored packets $l_2,l_3,l_4$ and the second cache-enabled user has stored packets $l_1, l_3, l_5$. The two users are covered by $\text{BS}_1$ and $\text{BS}_2$, respectively. When  $\text{BS}_1$  and $\text{BS}_2$ transmit packet $l_1$ and $l_2$, respectively, to the first and second users at the same slot, the received signal of the $i$-th  user is
 \begin{equation} \label{yi}
    y_i=\sqrt{P}d_{i,i}^{-\frac{\beta}{2}}h_{i,i}x_i+\sqrt{P}d_{j,i}^{-\frac{\beta}{2}}h_{j,i}x_j+n_0,
 \end{equation}
for $i\neq j\in\{1, 2\}$. $P$ is the transmit power,  $d_{i,i}$ and $d_{j,i}$  are respectively the distance from the serving $\text{BS}_i$ and the  interfering $\text{BS}_j$ to the $i$-th user, $h_{i,i}$ and $h_{j,i}$  are the corresponding channel fading, $\beta\geq 2$ denotes the path-loss exponent, $x_i$ and $x_j$ are the transmit signal with unit power, $n_0\sim \mathcal{CN}(0,\sigma^2)$ is the zero-mean additive white Gaussian noise (AWGN) with power $\sigma^2$. When the distance $d_{j,i}$ from $\text{BS}_j$ to the $i$-th user is smaller than $R_b$,  the CSI $\sqrt{P}d_{j,i}^{-\frac{\beta}{2}}h_{j,i}$ is known to the $i$-th user. Consider there is a specific index table of the packets and it is known to
all BSs and users. Before the interfering $\text{BS}_j$ transmits a packet (e.g., packet $l_j$) which has been cached by the users around, $\text{BS}_j$ broadcasts the packet index  via extra interactive signals to let the users around know the incoming interference signal $x_j$ in the slot. Therefore, based on the CSI  knowledge  (i.e., $\sqrt{P}h_{j,i}d_{j,i}^{-\frac{\beta}{2}}$) and the side information (i.e., the cached packet $l_j$: $x_j$), the $i$-th user can cancel the interference (i.e., the term ``$\sqrt{P}d_{j,i}^{-\frac{\beta}{2}}h_{j,i}x_j$" in equation (\ref{yi})). Accordingly,  the signal-to-interference-plus-noise ratio (SINR) of the $i$-th user is $\text{SINR}_i={P|h_{i,i}|^2d_{i,i}^{-\beta}}{\sigma^{-2}}$. 
\section{The packet loss rate}\label{sec:lossrate}
In each slot, $(1-q_{i,j})f_j$ of users in the $i$-th subset sent their requests to BSs to obtain packet $l_j$. Therefore,  in the plane, totally $\sum_{i=1}^{\binom{N}{M}}p_i(1-q_{i,j})f_j$ of users sent their requests to the BSs to obtain packet $l_j$. Then based on the aforementioned transmission scheme, in each slot the fraction of BSs transmitting packet $l_j$ is $\alpha_j$, which can be calculated by
 \begin{align}\label{alphaj}
\alpha_j=\frac{\sum_{i=1}^{ \binom{N}{M}}p_i(1-q_{i,j})f_j}{\sum_{j=1}^N\sum_{i=1}^{ \binom{N}{M}}p_i(1-q_{i,j})f_j}\triangleq\frac{f_j-q_jf_j}{1-\sum_{j=1}^Nq_jf_j},
 \end{align}
where $q_j\triangleq\sum_{i=1}^{ \binom{N}{M}}p_iq_{i,j}$ for $j=1,2,\cdots, N$. It is the average caching probability of packet $l_j$ over the whole network.

Based on  Slivnyak's theorem, we conduct the analysis with considering  that there is a typical user $u_0$ at the origin of the Euclidean area \cite{SINR}. Denote $R$ as the distance between the typical user and its serving BS.  Due to the randomness of the BSs and users, the distance $R$ between the typical user and its serving BS is variable. The probability density function (PDF) of $R$ is $ f_{R}(r)=2\pi \lambda_b r e^{-\pi\lambda_b r^2}$ \cite{SINR}. When the typical user is in the $i$-th subset and it requests packet $l_l\in\mathcal{L}\backslash\mathcal{L}_i$ which has not been cached in the local caching, the received signal of the typical user from its serving BS is given by
\begin{align}
    y_{0,i}&=\sqrt{P}d_{0,0}^{-\frac{\beta}{2}}h_{0,0}x_0+\sum\limits_{j\in\Phi_{b1}\odot \max\{R,R_b\} }\sqrt{P}d_{j,0}^{-\frac{\beta}{2}}h_{j,0}x_j\nonumber\\
    &+\sum\limits_{k\in\Phi_{b2}\odot R }\sqrt{P}d_{k,0}^{-\frac{\beta}{2}}h_{k,0}x_k+n_0,
 \end{align}
where $x_0$ is the desired signal of the  typical user $u_0$ for packet $l_l$. $x_j$ and $x_k$ are interfering signals. $d_{0,0}$ is the distance between the typical user and its serving BS. $d_{j,0}$ and $d_{k,0}$  are the distances from interfering BSs to the typical user. $h_{0,0}$, $h_{j,0}$ and $h_{k,0}$   are the corresponding channel fading. Consider Rayleigh fading channel with average unit power in this paper. $\Phi_{b1}$ and $\Phi_{b2}$ are independently thinning PPPs with parameter
 $\lambda_b\sum_{j=1}^N\alpha_jq_{i,j}$ and $\lambda_b(1-\sum_{j=1}^N\alpha_jq_{i,j})$, respectively. $\Phi_{b1}$  is the distribution of interfering BSs transmitting packets included in the cached packet set $\mathcal{L}_i$, and $\Phi_{b2}$ is that of interfering BSs transmitting packets included in the complementary set $\mathcal{L}\backslash\mathcal{L}_i$. Note that the interference from the interfering BSs (distributed with $\Phi_{b1}$)  inside the circle (centered at the origin with radius $R_b$) can be cancelled by the typical user with the knowledge of  the CSI and the cached packets. Therefore, the residual interference of the typical user comes from \expandafter{\romannumeral1}): the BSs distributed with $\Phi_{b1}$ outside the circle centered at the origin with radius $\max\{R,R_b\} $ (denoted by $\Phi_{b1} \odot \max\{R,R_b\} $), and  \expandafter{\romannumeral2}): the BSs distributed with $\Phi_{b2}$ outside the circle centered at the origin with radius $R$ (denoted by $\Phi_{b2} \odot R$). It can be observed that $\lambda_b\sum_{j=1}^N\alpha_jq_{i,j}=0$, if all the users are with the same caching scheme (i.e., $q_{i,j}\equiv q_{k,j}\in\{0, 1\}, \forall i\neq k\in\{1,2,\cdots,\binom{N}{M}\}$). It implies the network cannot reap any interference cancellation gain when all users are with the same caching scheme. The impacts of caching schemes on the network performance will be investigated in Section \ref{sec:schemes}.

Therefore, the SINR of the typical user $u_0$  is given by
\begin{align}\label{sinr}
    &\frac{P|h_{0,0}|^2d_{0,0}^{-\beta}}{\sum\limits_{j\in\Phi_{b1}\odot \max\{R,R_b\}} {P|h_{j,0}|^2d_{j,0}^{-\beta}}+\sum\limits_{k\in\Phi_{b2}\odot R }{P|h_{k,0}|^2d_{k,0}^{-\beta}}+\sigma^2}\nonumber\\
    &\triangleq\frac{P|h_{0,0}|^2d_{0,0}^{-\beta}}{I_c+I_{u}+\sigma^2}\triangleq\frac{P|h_{0,0}|^2d_{0,0}^{-\beta}}{\bar{I}_c},
\end{align}
where $I_c+I_{u}$ are the residual interference after the interference cancellation.  A packet will be lost if it cannot be transmitted completely over the slot assigned to it. The PLR $\mathcal{P}_l$ can be calculated by
 \begin{align}\label{lossdefine}
   \mathcal{P}_l&\triangleq\mathbb{E}\Big[\mathbb{P}\Big[{\tau}B \text{log}_2(1+\text{SINR})< T\Big]\Big]\nonumber\\
   &=\mathbb{E}\Big[\mathbb{P}\Big(\text{SINR}< 2^{\frac{T}{\tau B}}-1\Big)\Big].
\end{align}
Denote $\bar{T}\triangleq2^{\frac{T}{\tau B}}-1$. The average is taken over both the channel fading distribution and the spatial PPP. We then have,
\begin{theorem}\label{theorem1}
The PLR for users in the $i$-th subset to obtain the un-cached packets with partial CSI  can be calculated with equation (\ref{long}) at the bottom of the next page.
\begin{figure*}[hb]
 \hrulefill
 \begin{align}\label{long}
\mathcal{P}_l&\triangleq1-\mathcal{P}_{s,i}-\mathcal{P}_{b,i}=1-2\pi \lambda_b\!\!\int_0^{R_b}\!\!\! r \text{exp}\Big\{\!-\!r^\beta P^{-1}\bar{T}\sigma^2-
\pi r^2\Big\{\lambda_{b1}\mathcal{Z}_1\Big[\Big(\frac{r}{R_b}\Big)^\beta\bar{T}\Big]+\lambda_{b2}\mathcal{Z}_1(\bar{T})+\lambda_b\Big\}\!\!\Big\}\mathrm{d}{r}\nonumber\\
&-2\pi \lambda_b\!\!\int_{R_b}^\infty\!\!\! r \text{exp}\Big\{\!-\!r^\beta P^{-1}\bar{T}\sigma^2{\setlength\arraycolsep{0.5pt}-}
\pi r^2\lambda_b\left[Z_1(\bar{T})\!+\!1\right]\!\!\Big\}\mathrm{d}{r}.
 \end{align}
\end{figure*}
  \end{theorem}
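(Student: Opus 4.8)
The plan is to prove the equivalent statement for the complementary success probability, $\mathcal{P}_{s,i}+\mathcal{P}_{b,i}=1-\mathcal{P}_l=\mathbb{E}\big[\mathbb{P}(\text{SINR}\geq\bar{T})\big]$, by conditioning on the serving distance $R=r$ and averaging afterwards against the density $f_R(r)=2\pi\lambda_b r\,e^{-\pi\lambda_b r^2}$. The Rayleigh assumption is what makes this tractable: since $|h_{0,0}|^2$ is a unit-mean exponential variable and $d_{0,0}=r$ once we condition, the threshold event $\{\text{SINR}\geq\bar{T}\}$ is exactly $\{|h_{0,0}|^2\geq \bar{T}r^\beta\bar{I}_c/P\}$, and integrating the exponential tail over $|h_{0,0}|^2$ collapses the indicator into a single exponential factor.

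First I would write, conditioned on $R=r$,
\begin{align}
\mathbb{P}\big(\text{SINR}\geq\bar{T}\mid R=r\big)
&=\mathbb{E}\Big[e^{-\frac{\bar{T}r^\beta}{P}\left(I_c+I_u+\sigma^2\right)}\Big]\nonumber\\
&=e^{-\frac{\bar{T}r^\beta\sigma^2}{P}}\,\mathbb{E}\big[e^{-\frac{\bar{T}r^\beta}{P}I_c}\big]\,\mathbb{E}\big[e^{-\frac{\bar{T}r^\beta}{P}I_u}\big],\nonumber
\end{align}
where the noise term splits off immediately and the residual interference factorizes into two Laplace transforms because $\Phi_{b1}$ and $\Phi_{b2}$ are independent thinnings and the fading coefficients are i.i.d.

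The core step is to evaluate each Laplace transform through the probability generating functional of a PPP. For a homogeneous PPP of intensity $\lambda$ restricted to the exterior of a disc of radius $\rho$, with exponential marks, one has $\mathbb{E}[e^{-sI}]=\exp\!\big(-2\pi\lambda\int_\rho^\infty \frac{sPv^{-\beta}}{1+sPv^{-\beta}}v\,\mathrm{d}v\big)$; substituting $s=\bar{T}r^\beta/P$ turns $sPv^{-\beta}$ into $\bar{T}(r/v)^\beta$. For $I_u$ the exclusion radius is $\rho=R$ and the intensity is $\lambda_{b2}$, so the substitution $t=v/r$ produces the factor $\mathcal{Z}_1(\bar{T})$, where $\mathcal{Z}_1(x)\triangleq 2\int_1^\infty \frac{x t^{-\beta}}{1+x t^{-\beta}}t\,\mathrm{d}t$ is the normalized residual-interference integral. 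For $I_c$ the exclusion radius is $\max\{R,R_b\}$; here the two serving-distance regimes separate the problem. When $r\geq R_b$ the radius is again $r$, so $I_c$ contributes a factor $\mathcal{Z}_1(\bar{T})$ that merges with the $I_u$ term because $\lambda_{b1}+\lambda_{b2}=\lambda_b$, giving the single $\lambda_b\mathcal{Z}_1(\bar{T})$ of the second integral. When $r<R_b$ the radius is $R_b$; after $t=v/r$ the lower limit becomes $R_b/r$, and a rescaling of the integration variable absorbs this limit into the argument of $\mathcal{Z}_1$, yielding the shifted argument $\mathcal{Z}_1\big((r/R_b)^\beta\bar{T}\big)$ that appears in the first integral.

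Finally I would reassemble $\mathcal{P}_s=\int_0^\infty f_R(r)\,\mathbb{P}(\text{SINR}\geq\bar{T}\mid R=r)\,\mathrm{d}r$, splitting the range at $R_b$ into the $\mathcal{P}_{s,i}$ and $\mathcal{P}_{b,i}$ pieces; the factor $e^{-\pi\lambda_b r^2}$ inside $f_R$ is exactly what supplies the additive $\lambda_b$ term inside each exponent, while the prefactor $2\pi\lambda_b r$ becomes the integrand weight. Subtracting from one then gives (\ref{long}). The main obstacle I expect is the bookkeeping of the exclusion radius $\max\{R,R_b\}$ together with the change of variables for $I_c$: one must check carefully that the rescaling turns the $R_b/r$ lower limit into the scaled argument $(r/R_b)^\beta\bar{T}$, and that the $r\geq R_b$ regime collapses the two independent interference fields into one of intensity $\lambda_b$, so that continuity across $r=R_b$ holds and the two integrals match the stated form.
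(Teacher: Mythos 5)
Your proposal is correct and follows essentially the same route as the paper's appendix proof: condition on the serving distance $R=r$, use the unit-mean exponential (Rayleigh) tail to factor the conditional success probability into the noise term times the Laplace transforms of $I_c$ and $I_u$, evaluate each via the PPP probability generating functional with exclusion radii $\max\{R,R_b\}$ and $R$, and split the integration against $f_R(r)$ at $R_b$, where for $r\geq R_b$ the two thinned fields merge into one of intensity $\lambda_{b1}+\lambda_{b2}=\lambda_b$. One bookkeeping caveat: with your normalization $\mathcal{Z}_1(x)=2\int_1^\infty \frac{xt^{-\beta}}{1+xt^{-\beta}}\,t\,\mathrm{d}t$, the rescaling $t=v/R_b$ in the $r<R_b$ regime gives the $I_c$ exponent as $-\pi\lambda_{b1}R_b^2\,\mathcal{Z}_1\big[(r/R_b)^\beta\bar{T}\big]$ rather than the $-\pi\lambda_{b1}r^2\,\mathcal{Z}_1\big[(r/R_b)^\beta\bar{T}\big]$ appearing in (\ref{long}); the two are numerically equal (since $r^2\bar{T}^{2/\beta}=R_b^2\big[(r/R_b)^\beta\bar{T}\big]^{2/\beta}$) only because the paper's $\mathcal{Z}_1$ is in effect normalized as $\mathcal{Z}_1(x)=\bar{T}^{2/\beta}\int_{x^{-2/\beta}}^\infty(1+u^{\beta/2})^{-1}\mathrm{d}u$, so you should state explicitly which convention you use when matching the first integral.
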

\begin{proof}
Please refer to the Appendix.
\end{proof}
 \remark
 The PLR  for the un-cached packet deceases with the increase of $R_b$. It can be proved that $\frac{\partial \mathcal{P}}{\partial R_b}<0$ when $R_b<+\infty$. Increasing  $R_b$ (more CSI) is helpful to reap the interference cancellation gain and to reduce the PLR.

Moreover, the PLR of cached packets is zero because cached packets can be read out immediately from local caching.  Accordingly, when the user in the $i$-th subset  requests packet $l_j$ which has or not been cached, the PLR is
\begin{align}
\mathcal{P}_{i,j}=(1-q_{i,j})\mathcal{P}_{l}.
\end{align}
 \remark
For $R_b\rightarrow 0$, the system degenerates to the cache-enabled network without any CSI, making interference cancellation infeasible. The PLR is
 \begin{align}\label{without0}
&\mathcal{P}_{i,j}=
(1-q_{i,j})\Big\{1-\nonumber\\
&2\pi \lambda_b\!\!\int_{0}^\infty\!\!\! r \text{exp}\Big\{\!-\!r^\beta P^{-1}\bar{T}\sigma^2{\setlength\arraycolsep{0.5pt}-}
\pi r^2\lambda_b\left[Z_1(\bar{T})\!+\!1\right]\!\!\Big\}\mathrm{d}{r}\Big\}.
\end{align}

In comparison, the reduction of PLR for  the network with interference cancellation is $\mathcal{P}_{i,j|R_b}-\mathcal{P}_{i,j|R_b\rightarrow0}$.
Furthermore, if the noise is relative small than the interference, i.e., in the interference-limited network ($\sigma^2\!\rightarrow\!0$), the PLR in (\ref{without0}) turns to
\begin{align}\label{without}
&\mathcal{P}_{i,j}=
\frac{1-q_{i,j}}{Z_1^{-1}(\bar{T})+1}.
\end{align}
\remark
For the interference-limited network ($\sigma^2\rightarrow0$) and $R_b\rightarrow\infty$, i.e., when the global interfering CSI  are available at the cache-enabled user (yielding the complete interference cancellation), the PLR is
\begin{align}\label{with}
\!\!\!\mathcal{P}_{i,j}&\!=\!
(1\!-\!q_{i,j})\Big(1\!-\!2\pi \lambda_b\!\!\int_0^{\infty}\!\!\!\! r \text{exp}\Big\{\!-\!\pi r^2\Big[\lambda_{b2}\mathcal{Z}_1(\bar{T})\!+\!\lambda_b\Big]\Big\}\!\mathrm{d}{r}\Big)\nonumber\\
&=\frac{1-q_{i,j}}{1+(1-\sum_{j=1}^N\alpha_jq_{i,j})^{-1}\mathcal{Z}_1^{-1}(\bar{T})}.
\end{align}

 Compared with the PLR for the un-cached packet in (\ref{without}), the PLR for the un-cached packet in  (\ref{with}) decreases owing to the performance gain from the interference cancellation.
\section{The impacts of caching schemes}\label{sec:schemes}
In the previous section, we have derived the PLR $\mathcal{P}_{i,j}$ for the user in the $i$-th subset to obtain packet $l_j$. Then the average PLR in terms of all users over the whole network is
\begin{align}\label{average}
\mathcal{P}=\sum\limits_{i=1}^{ \binom{N}{M}}p_i\sum\limits_{j=1}^N\mathcal{P}_{i,j}f_j.
\end{align}
In this section, we analyze the impact of caching schemes on the average PLR. To gain further insight, we focus on the interference-limited network ($\sigma^2\rightarrow0$) with global CSI for complete interference cancellation ($R_b\rightarrow\infty$). 
Consider users are with distributed caching schemes, i.e., users  are without centralized controller and they have no knowledge of the popularity of the packet in advance. So different packets are cached by users in the network randomly with equal probability (i.e., $q_j=\frac{M}{N}$). We have the following proposition,
\begin{proposition}
For $R_b\rightarrow\infty$, $\sigma^2\rightarrow0$ and $q_j=\frac{M}{N}$, when all the packets have the same access popularity (i.e.,  $f_j=f_k, \forall j\neq k \in \{1,2,...,N\}$), the average PLR is
\begin{align}\label{t1}
\mathcal{P}=\frac{(1-\frac{M}{N})^2}{[1+\mathcal{Z}^{-1}_1(\bar{T})]-\frac{M}{N}}.
\end{align}
\end{proposition}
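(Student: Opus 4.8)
The plan is to substitute the given limiting conditions directly into the average-PLR expression (\ref{average}) and to exploit the symmetry introduced by the uniform popularity and the distributed-caching constraint so that the per-subset, per-packet terms collapse into a single closed form.

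First I would specialize the per-packet PLR. Since $R_b\to\infty$ and $\sigma^2\to0$, the relevant formula is (\ref{with}), namely $\mathcal{P}_{i,j}=(1-q_{i,j})/[1+(1-\sum_{j=1}^N\alpha_jq_{i,j})^{-1}\mathcal{Z}_1^{-1}(\bar{T})]$, so the quantity I must evaluate is the interference-cancellation factor $\sum_{j=1}^N\alpha_jq_{i,j}$. Under uniform popularity $f_j=1/N$ together with $q_j=M/N$, equation (\ref{alphaj}) gives $\sum_{j=1}^N q_jf_j=M/N$ and hence $\alpha_j=(f_j-q_jf_j)/(1-M/N)=1/N$ for every $j$. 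I would then invoke the cache-size constraint: each user in any subset stores exactly $M$ packets, so $\sum_{j=1}^N q_{i,j}=M$ for all $i$. Combining these, $\sum_{j=1}^N\alpha_jq_{i,j}=(1/N)\sum_{j=1}^N q_{i,j}=M/N$, independent of the subset index $i$.

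The decisive simplification is that the denominator of $\mathcal{P}_{i,j}$ thereby becomes a constant $D\triangleq 1+(1-M/N)^{-1}\mathcal{Z}_1^{-1}(\bar{T})$ common to all subsets and all packets, so $\mathcal{P}_{i,j}=(1-q_{i,j})/D$. Substituting into (\ref{average}) with $f_j=1/N$ yields $\mathcal{P}=(1/(ND))\sum_{i}p_i\sum_{j}(1-q_{i,j})$. Using $\sum_{j}(1-q_{i,j})=N-M$, again from the cache-size constraint, and $\sum_i p_i=1$, the double sum reduces to $N-M$, giving $\mathcal{P}=(N-M)/(ND)=(1-M/N)/D$. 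A final multiplication of numerator and denominator by $(1-M/N)$ rewrites this as the claimed $(1-M/N)^2/[1+\mathcal{Z}_1^{-1}(\bar{T})-M/N]$.

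The argument is essentially bookkeeping once these structural facts are in place, so the only real subtlety—the point I would verify most carefully—is that both $\alpha_j$ and the cancellation factor $\sum_j\alpha_j q_{i,j}$ are rendered subset-independent by the symmetric hypotheses. That independence is exactly what removes the caching matrix $\mathbf{Q}$ from the final answer and leaves a result depending only on the load ratio $M/N$, which is what makes the closed form possible.
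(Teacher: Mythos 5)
Your proposal is correct and takes essentially the same route as the paper's proof: both substitute $f_j=\frac{1}{N}$ and $q_j=\frac{M}{N}$ into (\ref{alphaj}) to obtain $\alpha_j=\frac{1}{N}$, use the cache-size constraint $\sum_{j=1}^N q_{i,j}=M$ to make the cancellation factor $\sum_{j=1}^N\alpha_j q_{i,j}=\frac{M}{N}$ (and hence the denominator of (\ref{with})) subset-independent, and then collapse the double sum in (\ref{average}) via $\sum_i p_i=1$. Your bookkeeping is if anything slightly tidier, since you invoke $q_j=\frac{M}{N}$ only where it is needed (in computing $\alpha_j$), whereas the paper cites $\sum_i q_{i,j}p_i=q_j=\frac{M}{N}$ again at its step (a) where the simplification already follows from $\sum_{j=1}^N q_{i,j}=M$ alone.
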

\begin{proof}
Substituting  $q_j=\frac{M}{N}$ and $f_j=\frac{1}{N}$ to (\ref{alphaj}), we have $\alpha_j=\frac{1}{N}$.  Base on equations (\ref{with}) and (\ref{average}), we then have $\mathcal{P}=\sum_{i=1}^{\binom{N}{M}}p_i\frac{\sum_{j=1}^N(1-q_{i,j})f_j}{1+(1-\sum_{j=1}^N\alpha_jq_{i,j})^{-1}\mathcal{Z}^{-1}_1(\bar{T})}
\stackrel{(a)}{=}\frac{(N-M)^2}{N^2-NM+N^2\mathcal{Z}^{-1}_1(\bar{T})}$.
 Step (a) is obtained by noting that $\sum\nolimits_{i=1}^{ \binom{N}{M}}q_{i,j}p_i=q_{j}=\frac{M}{N}$ and $\sum_{j=1}^Nq_{i,j}=M$. Then the proof is finished.
\end{proof}

The average packet in (\ref{t1}) decreases with the increase of caching storage $M$, the wireless bandwidth $B$ and the slot duration $\tau$. Increasing caching storage can reduce the burden of the wireless spectrum to meet the demand of a given PLR target. Caching storage can be exchanged for scarce spectrum and time resources. Furthermore, when the packets are with different access popularity, we have the following proposition.
\begin{proposition}
For $R_b\rightarrow\infty$, $\sigma^2\rightarrow0$ and $q_j=\frac{M}{N}$, when the packets have different access popularity, the minimum achievable PLR can be obtained with the caching scheme  developed via the following optimization problem.
\begin{align}
&\min_{\mathbf{P}, \mathbf{Q}} ~~\mathcal{P}=\sum\limits_{i=1}^{ \binom{N}{M}}p_i\frac{1-\sum_{j=1}^Nq_{i,j}f_j}{1+(1-\sum_{j=1}^Nf_jq_{i,j})^{-1}\mathcal{Z}^{-1}_1(\bar{T})}\label{aa}\\
&s.t.~~\left\{
\begin{array}{l}
q_{i,j}\in\{0,1\};\\
0\leq p_i\leq1; \\
\sum_{j=1}^N q_{i,j}= M;\\
\sum_{i=1}^{\binom{N}{M}} p_i=1;\\
\sum_{i=1}^{\binom{N}{M}} q_{i,j}p_i=\frac{M}{N};\\
{{p_i\leq p_j,\forall j<i \in \{1,2,...,\binom{N}{M}\}}}.
\end{array}\right.
\end{align}
\end{proposition}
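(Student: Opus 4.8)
The plan is to establish two facts: first, that under the stated regime ($R_b\to\infty$, $\sigma^2\to0$, $q_j=\tfrac{M}{N}$) the average PLR in (\ref{average}) collapses exactly to the objective in (\ref{aa}); and second, that the six listed constraints parameterize precisely the set of admissible distributed caching schemes $(\mathbf{P},\mathbf{Q})$. Once both hold, minimizing (\ref{aa}) over the feasible region is, by construction, identical to minimizing the achievable PLR over all such schemes, which is exactly what the proposition asserts.

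For the objective, I would begin from (\ref{average}) and substitute the complete-cancellation expression (\ref{with}). The key algebraic step is to evaluate $\alpha_j$ under the distributed-caching hypothesis $q_j=\tfrac{M}{N}$: since $\sum_{k=1}^N q_k f_k=\tfrac{M}{N}\sum_{k=1}^N f_k=\tfrac{M}{N}$, equation (\ref{alphaj}) yields $\alpha_j=\frac{f_j(1-M/N)}{1-M/N}=f_j$ for every $j$. Hence $\sum_{k=1}^N\alpha_k q_{i,k}=\sum_{k=1}^N f_k q_{i,k}$, so the denominator in (\ref{with}) depends only on the subset index $i$ and not on the requested packet $l_j$. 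The inner sum over $j$ in (\ref{average}) therefore factors, and using $\sum_{j=1}^N f_j=1$ the numerator simplifies as $\sum_{j=1}^N(1-q_{i,j})f_j=1-\sum_{j=1}^N q_{i,j}f_j$, reproducing (\ref{aa}) term by term.

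For the constraints, I would read each one off the system model of Section \ref{sec:system}. The binary condition $q_{i,j}\in\{0,1\}$ together with $\sum_{j=1}^N q_{i,j}=M$ encodes that each subset stores exactly one size-$M$ subcollection of $\mathcal{L}$; the conditions $0\le p_i\le1$ and $\sum_i p_i=1$ make $\mathbf{P}$ a valid density split across the $\binom{N}{M}$ subsets; the coupling constraint $\sum_i q_{i,j}p_i=\tfrac{M}{N}$ is precisely the regime hypothesis $q_j=\tfrac{M}{N}$ enforced for every packet $l_j$; and $p_i\le p_j$ for $j<i$ is merely the w.l.o.g.\ ordering convention already fixed in Section \ref{sec:system}. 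Since these jointly characterize exactly the feasible $(\mathbf{P},\mathbf{Q})$ and nothing more, the displayed program is the correct minimization.

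The derivation itself is largely bookkeeping; the step I expect to require the most care is verifying that the substitution of (\ref{with}) is legitimate \emph{uniformly over all subsets}, i.e.\ that the cancellation argument behind (\ref{with}) holds for each $i$ with its own thinning parameter $\sum_j\alpha_j q_{i,j}$, and that the feasible set is nonempty. For the latter I would exhibit the cyclic construction: take the $N$ cyclic shifts of a single contiguous $M$-block, each with $p_i=\tfrac{1}{N}$, so that every packet lies in exactly $M$ of the $N$ used subsets and thus $q_j=\tfrac{M}{N}$. I would also note, as a pointer to the later analysis, that each per-subset term rewrites as $\frac{(1-s_i)^2}{(1-s_i)+\mathcal{Z}_1^{-1}(\bar{T})}$ with $s_i\triangleq\sum_{j=1}^N q_{i,j}f_j$, a form monotone in $s_i$ that a subsequent optimization argument would exploit. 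Finally I emphasize that the proposition claims only that the minimum PLR equals the optimum of this mixed-integer nonlinear program (binary $\mathbf{Q}$, continuous $\mathbf{P}$, coupled through $q_j=\tfrac{M}{N}$); actually solving the program is deferred and lies outside this statement.
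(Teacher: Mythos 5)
Your proposal is correct and follows essentially the same route as the paper, whose entire justification is the one-line remark that the objective in (\ref{aa}) follows from substituting (\ref{with}) into (\ref{average}) after noting $\alpha_j=f_j$ when $q_j=\frac{M}{N}$ — exactly the computation you carry out via $\sum_k q_k f_k = \frac{M}{N}$ in (\ref{alphaj}). Your additional checks (that the constraints exactly parameterize the admissible schemes, and feasibility via the cyclic $M$-block construction with $p_i=\frac{1}{N}$) go beyond what the paper writes down, but they supplement rather than alter the paper's argument.
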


The objective function in (\ref{aa}) is derived from (\ref{with}) and (\ref{average}) by noting $\alpha_j=f_j$ when $q_j=\frac{M}{N}$. The programming to get the optimal caching scheme can be simply described in Algorithm 1. We then have the minimum achievable PLR via substituting the optimal $\mathbf{P}$ and $\mathbf{Q}$ into the objective function in (\ref{aa}).
\begin{table}
\label{table:pushing}
\begin{center}
\begin{tabular}{ll}
\hline
\multicolumn{2}{l}{\textbf{Algorithm 1:}  Programming for caching scheme}\\
\hline
1:& \textbf{Input}. $N, M,B,\tau,T, f_j, \forall j \in\{1,2,\cdots,N\}$.\\
2:& \textbf{Initialize}. $\mathbf{P}\leftarrow(0)_{N\times 1}$, $\mathbf{Q}\leftarrow(0)_{\binom{N}{M}\times N}$.\\
3:& $\binom{N}{M}$ schemes to select $M$ elements out of $\{l_1,l_2,\cdots,l_N\}$. \\
4:& \textbf{if} $l_j$ has been selected in the $i$-th scheme, $\forall i \in\{1,2,\cdots,\binom{N}{M}\}$\\
5:&  ~~~$q_{i,j}\leftarrow 1$\\
6:& \textbf{end if}\\
7:& Substitute $\mathbf{Q}$ into the problem and solve the linear optimization\\
&problem regarding $\mathbf{P}$ without the constrain $p_i\leq p_j,\forall j<i$.\\
8:& Rearrange the order of rows for $\mathbf{P}, \mathbf{Q}$ to satisfy $p_i\leq p_j,\forall j<i$.\\
9:& \textbf{Output}. Caching scheme $\mathbf{P}, \mathbf{Q}$.\\
\hline
\end{tabular}
\end{center}
\end{table}
\section{Numerical results and discussions}
We evaluate the performance of  the proposed system in this section. The BSs and users are located based on PPPs with intensities of  $\{\lambda_b,\lambda_u\}=\{\frac{100}{\pi500^2},\frac{2000}{\pi500^2}\}~\text{nodes/m}^2$. The wireless bandwidth $B=20$ MHz, the slot duration $\tau=0.5$ second and the transmit power $P=33$ dBm. We consider the interference-limited network ($\sigma^2=0$) and the path-loss $\beta=4$. Set the packet access popularity $f_{i}={\frac{1/i^{\gamma}}{\sum_{j=1}^{C}1/j^{\gamma}}}$ according to the Zipf distribution and $\gamma=0.8$. The packet library size $N=100$, the packet size $T=10$ Mbits, and the caching storage of each user $M=3$.
\begin{figure}[t]
\centering
\vspace{-0.15in}
\includegraphics[width=2.8in]{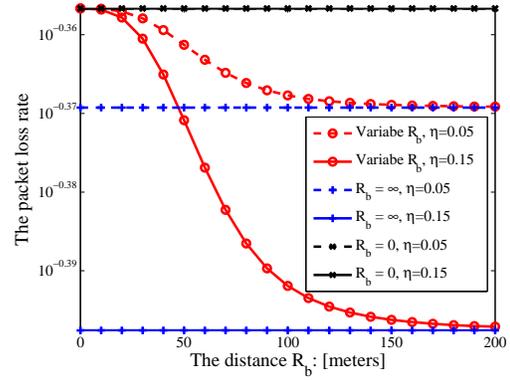}
\caption{The impact of $R_b$ on the PLR for the un-cached packets in the cache-enabled network.}
\label{R}
\end{figure}

Fig. \ref{R} demonstrates the impact of $R_b$ on the loss rate of the un-cached packets, larger $R_b$ implies more CSI are available for the cache-enabled users. Denote $\eta=\sum_{j=1}^N\alpha_jq_{i,j}$ for users in the $i$-th subset. Larger $\eta$ implies more interference can be cancelled with the cached packets for the users. It can be observed that the packet loss rate with complete interference cancellation ($R_b=\infty$) is much smaller that that without interference cancellation ($R_b=0$). It verifies the necessary and the efficiency of the interference cancellation proposed in this paper for the cache-enabled network. The packet loss rate decrease with the increase of $R_b$. The performance is approximately equal to that of complete interference cancellation ($R_b=\infty$) when $R_b=120$ ($180$) m for $\eta=0.05$ ($\eta=0.15$). Higher $\eta$ needs larger area of CSI (larger $R_b$) to achieve the approximate performance of the complete interference cancellation. On the other hand, lower packet loss rate can be achieved with larger $\eta$ because more interference is cancelled. So smaller area of CSI (smaller $R_b$) is needed to meet the given packet loss rate target when $\eta$ is larger.
\begin{figure}[t]
\centering
\includegraphics[width=2.8in]{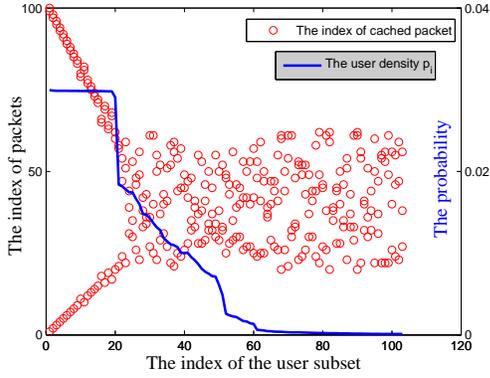}
\caption{The caching scheme based on Algorithm 1.}
\label{C}
\end{figure}

Fig. \ref{C} illustrates the optimal caching scheme which is obtained based on Algorithm 1. The probability ($p_i$) of users in the $i$-th subset and the packets cached in the corresponding users are illustrated in the figure.  The left and right ordinate are respectively  the index of the packets and the probability of users in each subsect. The red circle located at $(i,j)$ means the $j$-th packet is cached by the users in the $i$-th subset.  It can be observed that users in different subsets are with different densities. The users caching both the popular and unpopular packets are with higher density, yielding both the local caching gain and the interference cancellation gain. 
\begin{appendix}
Based on (\ref{sinr}) and (\ref{lossdefine}), we have
\begin{align}\label{esinr}
\mathcal{P}_l&=1\!-\!\mathbb{E}\Big[\mathbb{P}\Big(\text{SINR}\!\geq\!\bar{T}\Big)\Big]=1\!-\!\mathbb{E}_r\left[\mathbb{P}\left(h_{0,0}\!>\!r^\beta P^{-1}\bar{I}_c\bar{T}
                      \right)\right]\nonumber\\
                      &=1-\int_0^{{\infty}} \mathbb{P}\left[h_{0,0}>r^\beta P^{-1}\bar{I}_c\bar{T}
                      \right]f_{R}(r)\mathrm{d}{r}\nonumber\\
                      &=1-\Big(\int_0^{{R_b}}+\int_{R_b}^{{\infty}}\Big) \mathbb{P}\left[h_{0,0}>r^\beta P^{-1}\bar{I}_c\bar{T}
                      \right]f_{R}(r)\mathrm{d}{r}\nonumber\\
                      &\triangleq1-\mathcal{P}_{s,i}-\mathcal{P}_{b,i}.
\end{align}
Next, we will get $\mathcal{P}_{s,i}$ and $\mathcal{P}_{b,i}$ separately. Firstly, for $\mathcal{P}_{s,i}$, when $R\leq R_b$, 
the SINR in (\ref{sinr}) can be written as follows,
\begin{align}\label{sinrs}
    &\frac{P|h_{0,0}|^2d_{0,0}^{-\beta}}{\sum_{j\in\Phi_{b1}\odot R_b }{P|h_{j,0}|^2d_{j,0}^{-\beta}}+\sum_{k\in\Phi_{b2}\odot R }{P|h_{k,0}|^2d_{k,0}^{-\beta}}+\sigma^2}\nonumber\\
    &\triangleq\frac{P|h_{0,0}|^2d_{0,0}^{-\beta}}{I_{c,s}+I_{u,s}+\sigma^2}\triangleq\frac{P|h_{0,0}|^2d_{0,0}^{-\beta}}{\bar{I}_{c,s}}.
\end{align}
  Therefore, for $\mathcal{P}_{s,i}$, we have the follows,
\begin{align}\label{esinr}
\mathcal{P}_{s,i}&=\int_0^{R_b} \mathbb{P}\left[h_{0,0}>r^\beta P^{-1}\bar{I}_{c,s}\bar{T}
                      \right]f_{R}(r)\mathrm{d}{r},
\end{align}
where
\begin{align}\label{pgi0}
&\mathbb{P}\left[h_{0,0}>r^\beta P^{-1}\bar{I}_{c,s}\bar{T}\right]=\mathbb{E}_{\bar{I}_{c,s}}[e^{-r^\beta P^{-1}\bar{I}_{c,s}\bar{T}}]\nonumber\\
                         &=e^{-r^\beta P^{-1}\bar{T}\sigma^2}\mathbb{E}_{{I}_{c,s}}[e^{-r^\beta P^{-1}{I}_{c,s}\bar{T}}]\times\mathbb{E}_{{I}_{u,s}}[
                           e^{-r^\beta P^{-1}{I}_{u,s}\bar{T}}]\nonumber\\
                        &=e^{-r^\beta P^{-1}\bar{T}\sigma^2}\mathcal{L}_{I_{c,s}}\left[r^\beta P^{-1}\bar{T}\right]\times\mathcal{L}_{I_{u,s}}\left[r^\beta P^{-1}\bar{T}\right].
\end{align}
Interference $I_{c,s}$ is from BSs which are spatially distributed as PPP $\Phi_{b1}$ and outside the circle centered at the origin with radius $R_b$. So the Laplace transform $\mathcal{L}_{I_{c,s}}[r^\beta P^{-1}\bar{T}]$ is
\begin{align}\label{laplace1}
&\mathcal{L}_{I_{c,s}}\left[r^\beta P^{-1}\bar{T}\right]=\mathbb{E}_{I_{c,s}}\left[\text{exp}\left({-r^\beta P^{-1}\bar{T}I_c}\right)\right]\nonumber\\
&=\mathbb{E}_{\Phi_{b1},\{|h_{j,0}|^2\}}\Big[\text{exp}\Big({-r^\beta P^{-1}\bar{T}{\sum_{j\in\Phi_{b1} \odot R_b }{P|h_{j,0}|^2d_{j,0}^{-\beta}}}}\Big)\Big]\nonumber\\
&=\mathbb{E}_{\Phi_{b1},\{|h_{j,0}|^2\}}\Big[\prod_{j\in\Phi_{b1}\odot R_b}\text{exp}\left({-r^\beta\bar{T}|h_{j,0}|^2d_{j,0}^{-\beta}}\right)\Big]\nonumber\\
&=\mathbb{E}_{\Phi_{b1}}\Big[\prod_{j\in\Phi_{b1}\odot R_b}\mathbb{E}_{\{|h_{j,0}|^2\}}\Big[\text{exp}\left({-r^\beta\bar{T}|h_{j,0}|^2d_{j,0}^{-\beta}}\right)\Big]\Big]\nonumber\\
&\stackrel{(a)}{=}\mathbb{E}_{\Phi_{b1}}\Big[\prod_{j\in\Phi_{b1}\odot R_b}  \frac{1}{1+r^\beta\bar{T}d_{j,0}^{-\beta}}\Big]\nonumber\\
&=\text{exp}\Big[{-2\pi\lambda_{b1}\int_{R_b}^\infty\Big(1-\frac{1}{1+r^\beta\bar{T}v^{-\beta}}\Big)v\mathrm{d}{v}}\Big]\nonumber
\end{align}
\begin{align}
&=\text{exp}\Big[{-2\pi\lambda_{b1}\mathlarger{\int}_{R_b}^\infty\frac{v}{1+(r^\beta\bar{T})^{-1}v^{\beta}}\mathrm{d}{v}}\Big]\nonumber\\
&\stackrel{(b)}{=}\text{exp}\Big[{-\pi\lambda_{b1}r^2{\bar{T}}^\frac{2}{\beta}\mathlarger{\int}_{[(\frac{r}{R_b})^\beta{\bar{T}}]^{-\frac{2}{\beta}}}^\infty({1+u^{\frac{\beta}{2}}})^{-1}\mathrm{d}{u}}\Big]\nonumber\\
&=\text{exp}\Big\{{-\pi\lambda_{b1}r^2\mathcal{Z}_1\Big[\Big(\frac{r}{R_b}\Big)^\beta\bar{T}\Big]}\Big\}.
\end{align}
By noting the Rayleigh channel fading we have Step (a). Using change of variables with $u=[r^\beta\bar{T}]^{-\frac{2}{\beta}}v^2$, we have Step (b). Denote $\mathcal{Z}_1(x)\triangleq\frac{2\bar{T}}{\beta-2}{}_2F_1[1,1-\frac{2}{\beta};2-\frac{2}{\beta};-x]$,
where ${}_2F_1[\cdot]$ is the Gauss hypergeometric function. Furthermore, interference $I_{u,s}$ comes from the BSs which are spatially distributed as PPP $\Phi_{b2}$ and outside the circle centered at the origin with radius $R$. With the same approach of (\ref{laplace1}), the simplified derivation for the Laplace transform $\mathcal{L}_{I_{u,s}}[r^\beta P^{-1}\bar{T}]$ is
\begin{align}\label{laplace2}
&\mathcal{L}_{I_{u,s}}\left[r^\beta P^{-1}\bar{T}\right]=\mathbb{E}_{I_{u,s}}\left[\text{exp}({-r^\beta P^{-1}\bar{T}I_u})\right]\nonumber\\
&=\mathbb{E}_{\Phi_{b2},\{|h_{k,0}|^2\}}\Big[\text{exp}\Big({-r^\beta P^{-1}\bar{T}{\sum_{k\in\Phi_{b2} \odot R }{P|h_{k,0}|^2d_{k,0}^{-\beta}}}}\Big)\Big]\nonumber\\
&=\text{exp}[{-\pi\lambda_{b2}r^2\mathcal{Z}_1(\bar{T})}].
\end{align}
Substituting (\ref{pgi0}), (\ref{laplace1}) and (\ref{laplace2}) into (\ref{esinr}), we have $\mathcal{P}_{s,i}$ in (\ref{long}).

Similarly, $\mathcal{P}_{b,i}$ can be obtained with the same approach of deriving $\mathcal{P}_{s,i}$, and the proof is omitted here for the limited spaces. We then get Theorem \ref{theorem1} and the proof is finished.  \hfill\ensuremath{\blacksquare}
\end{appendix}
\bibliographystyle{IEEEtran}
\bibliography{paperinter}

\begin{thebibliography}{10}
\providecommand{\url}[1]{#1}
\csname url@samestyle\endcsname
\providecommand{\newblock}{\relax}
\providecommand{\bibinfo}[2]{#2}
\providecommand{\BIBentrySTDinterwordspacing}{\spaceskip=0pt\relax}
\providecommand{\BIBentryALTinterwordstretchfactor}{4}
\providecommand{\BIBentryALTinterwordspacing}{\spaceskip=\fontdimen2\font plus
\BIBentryALTinterwordstretchfactor\fontdimen3\font minus
  \fontdimen4\font\relax}
\providecommand{\BIBforeignlanguage}[2]{{%
\expandafter\ifx\csname l@#1\endcsname\relax
\typeout{** WARNING: IEEEtran.bst: No hyphenation pattern has been}%
\typeout{** loaded for the language `#1'. Using the pattern for}%
\typeout{** the default language instead.}%
\else
\language=\csname l@#1\endcsname
\fi
#2}}
\providecommand{\BIBdecl}{\relax}
\BIBdecl

\bibitem{SINR}
H.-S. Jo, Y.~J. Sang, P.~Xia, and J.~Andrews, ``{Heterogeneous Cellular
  Networks with Flexible Cell Association: A Comprehensive Downlink SINR
  Analysis},'' \emph{IEEE Trans. Wireless Commun.}, vol.~11, no.~10, pp.
  3484--3495, Oct. 2012.

\bibitem{5G2}
E.~Ba{\c{s}}tu{\u{g}}, M.~Bennis, and M.~Debbah, ``{Living on the Edge: The
  Role of Proactive Caching in 5G Wireless Networks},'' \emph{IEEE Commun.
  Mag.}, vol.~52, no.~8, pp. 82--89, Aug. 2014.

\bibitem{lau5G}
W.~Han, A.~Liu, and V.~K.~N. Lau, ``{PHY-Caching in 5G Wireless Networks:
  Design and Analysis},'' \emph{IEEE Commun. Mag.}, vol.~54, no.~8, pp. 30--36,
  Aug. 2016.

\bibitem{femto2}
N.~Golrezaei, A.~Molisch, A.~Dimakis, and G.~Caire, ``{Femtocaching and
  device-to-device collaboration: A new architecture for wireless video
  distribution},'' \emph{IEEE Commun. Mag.}, vol.~51, no.~4, pp. 142--149, Apr.
  2013.

\bibitem{caching1}
U.~Niesen, D.~Shah, and G.~W. Wornell, ``{Caching in Wireless Networks},''
  \emph{IEEE Trans. Inf. Theory}, vol.~58, no.~10, pp. 6524--6540, Oct. 2012.

\bibitem{aided}
M.~A. Maddah-Ali and U.~Niesen, ``{Cache-Aided Interference Channels},'' in
  \emph{{IEEE Int'l Symposium on Inf. Theory (ISIT)}}, June 2015.

\bibitem{IA}
M.~Deghel, E.~Ba\c{S}tu\u{g}, M.~Assaad, and M.~D\'{e}bbah, ``{On the benefits
  of edge caching for MIMO interference alignment},'' in \emph{Int'l Workshop
  on Signal Proc. Advances in Wireless Commun. (SPAWC)}, June 2015.

\bibitem{Anliu}
A.~Liu and V.~K.~N. Lau, ``{Mixed-Timescale Precoding and Cache Control in
  Cached MIMO Interference Network},'' \emph{IEEE Trans. Signal Process.},
  vol.~61, no.~24, pp. 6320--6332, Dec. 2013.

\bibitem{sengupta}
A.~Sengupta, R.~Tandon, and O.~Simeone, ``{Cache Aided Wireless Networks:
  Tradeoffs between Storage and Latency},'' in \emph{arXiv:1512.07856 [cs.IT],
  https://arxiv.org/abs/1512.07856}, Dec. 2015.

\bibitem{tranrecei}
N.~Naderializadeh, M.~A. Maddah-Ali, and A.~S. Avestimehr, ``{Fundamental
  Limits of Cache-Aided Interference Management},'' in \emph{arXiv:1602.04207
  [cs.IT], http://arxiv.org/abs/1602.04207}, Apr. 2016.

\bibitem{hachem}
J.~Hachem, U.~Niesen, and S.~Diggavi, ``{A Layered Caching Architecture for the
  Interference Channel},'' in \emph{arXiv:1605.01668 [cs.IT],
  https://arxiv.org/abs/1605.01668}, May 2016.

\bibitem{TMM}
F.~Xu, M.~Tao, and K.~Liu, ``{Fundamental Tradeoff between Storage and Latency
  in Cache-Aided Wireless Interference Networks},'' in \emph{arXiv:1605.00203
  [cs.IT], https://arxiv.org/abs/1605.00203}, July 2016.

\bibitem{hachem2}
J.~Hachem, U.~Niesen, and S.~Diggavi, ``{Degrees of Freedom of Cache-Aided
  Wireless Interference Networks},'' in \emph{arXiv:1606.03175 [cs.IT],
  https://arxiv.org/abs/1606.03175}, June 2016.

\end{thebibliography}

\end{document}